\newtheorem{Theorem}{Theorem}
\newtheorem{Proposition}{Proposition}
\newtheorem{Corollary}{Corollary
}
\newtheorem{Definition}{Definition}
\newtheorem{Example}{Example}
\newtheorem{lemma}{Lemma}
\newenvironment{proof}{\textit{Proof\,:}} { $\blacksquare$}
\begin{document}
\title{Edge Coloring Technique to Remove Small Elementary Trapping Sets from Tanner Graph of  QC-LDPC Codes  with Column Weight 4}
\author{Mohammad-Reza~Sadeghi and Farzane Amirzade\\
\thanks{%
Manuscript received  ??, ????; revised  ??, ????.}
\thanks{  M.-R. Sadeghi and F. Amirzade are with the Department of Mathematics and Computer Science, Amirkabir University of Technology.

(e-mail:  famirzade@gmail.com,  msadeghi@aut.ac.ir).}
 \thanks{%
 Digital Object Identifier ????/TCOMM.?????}}


\maketitle
\begin{abstract}
One of the phenomena that  causes high decoding failure rates is trapping sets. Characterization of $(a,b)$ elementary trapping sets (ETSs), their graphical properties and the lower bounds on their size in variable regular LDPC codes with column weights 3, 4, 5 and 6, where $a$ is the size of the ETS and $b$ is the number of degree-one check nodes,   have been an interesting subject among researchers. Although progressive-edge-growth method (PEG) to construct LDPC codes free of an specific ETS has been proposed in the literature, it is mostly applied to  LDPC codes with column weight 3. In this paper, we focus on constructing QC-LDPC codes with column weight 4 whose Tanner graphs are free of small ETSs. Using  coloring the edges of the variable node (VN) graph corresponding to an ETS, we provide the sufficient conditions to obtain QC-LDPC codes with column weight 4, girth 6 and free of $(5,b)$ ETSs, where $b\leq4$, and $(6,b)$ ETs, where $b\leq2$. Moreover, for $(4,n)$-regular QC-LDPC codes with girth 8, we present a method to remove $(7,4)$ ETSs from Tanner graphs. 
  \end{abstract}
\begin{IEEEkeywords}
QC-LDPC codes, girth, Tanner graph, elementary trapping set, edge coloring.
\end{IEEEkeywords}

%
\IEEEpeerreviewmaketitle
\section{Introduction}
\IEEEPARstart{Q} uasi-cyclic low-density parity-check (QC-LDPC) codes are an essential  category of LDPC codes. One of the approaches for constructing QC-LDPC codes is graph-theoretic-based whose most well-known methods are progressive edge growth (PEG) and  protograph-based methods. The most important representation of LDPC codes is Tanner graph.  The length of the shortest cycles of the Tanner graph, girth, has been known to influence the code performance. An induced subgraph of the Tanner graph on $a$ variable nodes and $b$ check nodes of odd degrees is called an $(a,b)$ trapping set of size $a$. Empirical results have shown that, in addition to short cycles, trapping sets influence the performance of  LDPC codes.  Among them, the most harmful ones which cause high decoding failure rates and exert a strong influence on the error floor are those  with check nodes of degree 1 or 2 which is called elementary trapping sets (or simply ETSs).

Characterizations of ETSs in variable-regular LDPC codes with  column weight values of $3,4,5,6$, girths $6,8$ and $a,b\leq10$ were presented in \cite{2010}-\cite{2016}.  The tightest lower bounds on the size of ETSs in variable-regular LDPC codes with different girths were provided in $\cite{farzane}$ which demonstrate that as the girth increases, the lower bound on the size of trapping sets will increase too. Besides characterizing trapping sets, avoiding small trapping sets to reduce the error floor of LDPC codes is a challenging problem. Raising the girth is known as a technique to avoid harmful trapping sets with small sizes. Another method to remove ETSs with small size is progressive-edge-growth (PEG). Using PEG algorithm, QC-LDPC codes whose permutation matrices are obtained from Latin Squares and their Tanner graphs are free of some small trapping sets were constructed in $\cite{Vasic2}$. An improved version of  PEG algorithm in $\cite{Vasic}$ results in an  LDPC code with column weight 3 whose Tanner graph has girth 8, is free of $(5,3)$ trapping sets and contains a minimum number of $(6,4)$ trapping sets. Avoiding some 8-cycles in the Tanner graph of a fully connected  $(3,n)$-regular QC-LDPC code  with girth 8 causes the removal of $(5,3)$ and $(7,3)$ ETSs  which result in a Tanner graph free of $(a,b)$ ETSs, where $a\leq8$ and $b\leq3$, \cite{main}. Edge coloring technique to determine  occurrence and avoidance of ETSs in the Tanner graph of QC-LDPC codes with different column weights was first introduced in \cite{TCOMM2}.  For example, it was proved that Tanner graphs of fully connected $(3,n)$-regular QC-LDPC codes with girth 6 are free of the $(a,1)$ ETS. In addition, sufficient conditions for the exponent matrices  to obtain  QC-LDPC codes with column weight 3, minimum lifting degree, whose Tanner graphs have girths 6 and 8  and  are free of   small size ETSs were presented.  

As mentioned above, although graphical structures of trapping sets in LDPC codes with different column weights have been presented in the literature, most of constructed LDPC codes whose Tanner graphs are free of ETSs with small size have a column weight 3.  In this paper, for the first time, sufficient conditions for an exponent matrix to construct  $(4,n)$-regular QC-LDPC codes whose Tanner graphs have girths 6 and 8 and are free of small ETSs are provided. In fact, we first propose some conditions which make coloring the edges of the VN graph, corresponding to an specific ETS, with four colors impossible. Since in fully connected QC-LDPC codes the non-existence of 4-edge-coloring in a VN graph is equivalent to the non-existence of its corresponding ETS in Tanner graph, \cite{TCOMM2}, we conclude that our proposed conditions cause the removal of those specific ETSs from Tanner graph. 

The smallest size of an ETS in Tanner graph of LDPC codes with column weight 4 is 5, \cite{farzane}. For $(4,n)$-regular QC-LDPC codes with girth 6, we prove that the removal of 6-cycles obtained from the first three rows of the exponent matrix and the removal of 6-cycles obtained from rows with row indices 1, 2 and 4 cause the removal of $(5,b)$ ETSs, where $b\leq4$, and $(6,b)$ ETs, where $b\leq2$. For $(4,n)$-regular QC-LDPC codes with girth 8, we first prove that the number of non-isomorphic $(7,4)$ ETSs, which is the smallest ETS in girth 8 LDPC codes, \cite{farzane}, is one only. We, then, prove that avoiding 8-cycles obtained from three rows of the exponent matrix including the first row twice and the second row at least once prevents a 4-edge-coloring for the VN graph of $(7,4)$ ETS, which is a complete bipartite graph $K_{3,4}$. Therefore, avoiding these 8-cycles provides us with $(4,n)$-regular QC-LDPC codes  whose Tanner graphs have girth 8 and are free of $(7,4)$ ETSs.  In order to reduce the size of the search space we use the numerical results in the literature for QC-LDPC codes with column weight 3.
     
\raggedbottom The rest of the paper is organized as follows. Section \ref{II} presents some basic notations, definitions and our main tools such as edge coloring VN graphs corresponding to the ETSs to prove our results.  In Sections \ref{IV} and \ref{V}, respectively, we consider sufficient conditions to have QC-LDPC codes with girths 6 and 8 whose Tanner graphs are free of some small ETSs. In the last section we summarize our results. 
\section{Preliminaries}\label{II}
Let $N$ be an integer number. Consider the following exponent matrix $B=[b_{ij}]$, where $b_{ij}\in \lbrace 0,1,\cdots,N-1\rbrace$, 
\begingroup\fontsize{11pt}{11pt}\begin{align}\label{Rela2}
B=\left[\begin{array}{cccc}
b_{00}&b_{01}&\cdots &b_{0(n-1)}\\
b_{10}&b_{11}&\cdots &b_{1(n-1)}\\
\vdots &\vdots &\ddots &\vdots \\
b_{(m-1)0}&b_{(m-1)1}&\cdots &b_{(m-1)(n-1)}\\
\end{array}\right].
\end{align}\endgroup

To construct the parity-check matrix of a QC-LDPC code, all $\infty$ elements of the matrix, $B$, is substituted by an $N\times N$ zero matrix and the $ij$-th element of  $B$, which is a positive integer number, is substituted by an $N\times N$ matrix $I^{b_{ij}}$. This matrix is a circulant permutation matrix (CPM) in which the single 1-component of the top row is located at the $b_{ij}$-th position and other entries of the top row are zero. The $r$-th row of the circulant permutation matrix is formed by  $r$ right cyclic shifts of the first row and clearly the first row is a right cyclic shift of the last row.  The null space of the parity-check matrix gives us a QC-LDPC code.

The necessary and sufficient condition for the existence of cycles of the length $2k$  in  the Tanner graph of QC-LDPC codes was provided in $\cite{2004}$. This well-known result is our principle tool and we summarize it as follows. If
\begingroup\fontsize{13pt}{13pt}\begin{align}\label{Equation}
\sum_{i=0}^{k-1}(b_{m_in_i}-b_{m_in_{i+1}})=0  \mod N,
\end{align}\endgroup
where $n_k=n_0,\ m_i\neq m_{i+1},\ n_i\neq n_{i+1}$ and $b_{m_in_i}$ is the $(m_i,n_i)$-th entry of $B$, then the Tanner graph of the parity-check matrix has cycles of the length $2k$. Equation (\ref{Equation}) proves that the cycle distribution of a code is fully described by its exponent matrix and lifting degree. If we aim to find a QC-LDPC code with girth, $g$, and the lifting degree, $N$, from the exponent matrix, $B$, we have to choose the elements $b_{ij}s$ belong to the set $\{0,\dots,N-1\}$
	such that Equation (\ref{Equation}) are avoided for values of $k < \frac{g}{2}$.

Tanner graph is a bipartite graph. Suppose the set of variable nodes, $V$, forms one set of the Tanner graph and the set of check nodes, $C$, forms another set of the Tanner graph.
\begin{Definition}
An $(a,b)$ trapping set is an induced subgraph of the Tanner graph on a subset $S$ of $V$, where  $|S|=a$, and contains $b$ check nodes of odd degrees,  referred to as  unsatisfied check nodes. The other check nodes have even degrees named as satisfied check nodes. An $(a,b)$ trapping set is called elementary if all check nodes are of degree 1 or 2. As a result, all unsatisfied check nodes in an elementary trapping set (or ETS) have degree one. 
\end{Definition}

For a bipartite graph $G$ corresponding to
an ETS, a $variable\ node$ (VN) $graph$ is
constructed by removing all degree-one check nodes, defining variable nodes of G as its vertices and degree-two check nodes connecting the variable nodes in $G$ as
its edges. 

Any cycle of length $k$ in the VN graph is equivalent to a $2k$-cycle in its corresponding ETS. For example, a sequence of $v_0,c_0,v_1,c_1,v_2,c_2$ as a 6-cycle in an ETS, where $v_i\in V$ and $c_i\in C$, is equivalent to a triangle with vertices $v_1,v_2,v_3$ in the VN graph.  Therefore, in a Tanner graph with girth at least 8, the VN graph of each ETS is a triangle-free graph. 

In the following we provide some well-known theorems including Vizing's theorem in graph theory \cite{Bondy} as well as the consequences related to  coloring the edges of the VN graphs in \cite{TCOMM2},  which are our principle tools to achieve our results. 

\begin{Definition}\label{Edge-coloring}
	An edge coloring of a graph, $G$, is an assignment of colors (labels)  to the edges of the graph so that no two adjacent edges have the same color (label). The minimum required number of colors for the edges of a given graph is called the chromatic index of the graph which is denoted by ${\mathcal{X'}}(G)$, or simply  ${\mathcal{X'}}$, which indicates the graph has ${\mathcal{X'}}$-edge-coloring.
\end{Definition}

\begin{Theorem}\label{Vising}
\cite{Bondy}, if $\Delta(G)$ is the maximum degree of a graph $G$, then $$\Delta(G)\leq {\mathcal{X'}}\leq \Delta(G)+1.$$
\end{Theorem}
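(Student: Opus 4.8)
The plan is to establish the two inequalities separately; the left one is immediate, and the right one is the substance of the theorem. For the lower bound, observe that if $u$ is a vertex of maximum degree $\Delta(G)$, then the $\Delta(G)$ edges incident to $u$ are pairwise adjacent, so a proper edge coloring (Definition \ref{Edge-coloring}) must assign them $\Delta(G)$ distinct colors; hence $\mathcal{X'}\geq\Delta(G)$ with no further work.

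For the upper bound $\mathcal{X'}\leq\Delta(G)+1$, I would argue by induction on the number of edges, coloring one edge at a time. Writing $\Delta=\Delta(G)$, suppose every edge except one, say $e=uv_0$, has been properly colored from the palette $\{1,\dots,\Delta+1\}$. Since every vertex meets at most $\Delta$ edges while $\Delta+1$ colors are available, each vertex has at least one \emph{missing} color (one absent from its incident edges). The goal is to recolor a handful of edges so that $u$ and $v_0$ acquire a common missing color, which may then be placed on $e$ to finish the induction step.

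The key device is a \emph{fan} at $u$: a maximal sequence of distinct neighbours $v_0,v_1,\dots,v_k$ of $u$ in which, for each $i\geq 1$, the color of $uv_i$ is a color missing at $v_{i-1}$. I would first record the elementary \emph{rotation} property of a fan: for any $j\leq k$ one may recolor $uv_0,\dots,uv_{j-1}$ by shifting each color one step down the fan, leaving $uv_j$ uncolored and keeping the coloring proper. Fixing a color $\alpha$ missing at $u$ and a color $\beta$ missing at $v_k$, if $\alpha$ already happens to be missing at $v_0$ we color $e$ with $\alpha$ and stop; otherwise the argument splits according to where $\alpha$ and $\beta$ occur, driven by a maximal two-colored alternating path.

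The crux, and the step I expect to be the main obstacle, is the Kempe-chain exchange. Consider the subgraph spanned by the edges colored $\alpha$ or $\beta$; its connected components are paths and even cycles. Starting the maximal $\alpha\beta$-path $P$ at $v_k$ (where $\beta$ is missing) and swapping $\alpha\leftrightarrow\beta$ along $P$, the delicate point is a case analysis on whether $P$ terminates at $u$, at $v_0$, or elsewhere. In each case one must verify that after the swap some fan vertex $v_j$ ends up missing the color $\alpha$ that $u$ also misses, so that rotating the fan down to $v_j$ frees $uv_j$ and permits coloring it $\alpha$. The subtlety that $u$ can be an endpoint of at most one such path, so that the two candidate fans (ending at $v_0$ and at $v_k$) cannot both fail, is exactly Vizing's original insight and the only non-routine ingredient. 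Once this is settled, $e$ is recolored, the induction closes, and therefore $\mathcal{X'}\leq\Delta+1$, completing the proof of Theorem \ref{Vising}.
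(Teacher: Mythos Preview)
The paper does not prove this theorem at all: it is quoted as a known result from \cite{Bondy} and used as a tool, with no argument given. Your proposal supplies the standard Vizing fan argument (lower bound trivial; upper bound by induction on edges, building a fan at $u$ and resolving the last edge via an $\alpha\beta$ Kempe-chain swap), which is correct in outline and is exactly what one finds in \cite{Bondy} or \cite{Green}. So there is nothing to compare against in the paper itself; your sketch is the classical proof and would be acceptable, though in the context of this paper a one-line citation would have sufficed.
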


A close connection between the chromatic index of a VN graph and the occurrence  of its corresponding ETS in Tanner graph of fully connected QC-LDPC codes has been proposed in \cite{TCOMM2} which is as follows.
\begin{Proposition}\label{Prop1}
	Given a fully connected $(m,n)$-regular QC-LDPC code. The necessary condition for the Tanner graph to contain any $(a,b)$ ETS  is that the corresponding VN graph has $m$-edge-coloring.
\end{Proposition}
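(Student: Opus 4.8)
The plan is to produce an explicit proper edge coloring of the VN graph using exactly $m$ colors, read off directly from the block structure of the parity-check matrix; exhibiting such a coloring is precisely the assertion that the VN graph admits an $m$-edge-coloring. Recall that in a fully connected $(m,n)$-regular QC-LDPC code the parity-check matrix decomposes into $m$ horizontal block-rows, one per row of the exponent matrix $B$, and, since the code is fully connected, every $N\times N$ block is a CPM rather than a zero block. Because each CPM carries exactly one nonzero entry in every column, each variable node is incident to exactly one check node in each of the $m$ block-rows, so the $m$ checks neighboring a given variable node are distributed one per block-row.

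First I would assume the Tanner graph realizes the $(a,b)$ ETS, and assign to every degree-two check node of that ETS the index $i\in\{0,1,\dots,m-1\}$ of the block-row containing it. Since each degree-two check becomes an edge of the VN graph, this labels every edge with one of $m$ colors. I would then verify that the labeling is proper. Suppose two distinct edges of the VN graph meet at a common variable node $v$; they arise from distinct degree-two checks $c_1\neq c_2$, both incident to $v$. Were they assigned the same color, $c_1$ and $c_2$ would lie in a common block-row, contradicting the fact that $v$ has exactly one check neighbor in every block-row. Thus adjacent edges always receive different colors, so the labeling is a valid $m$-edge-coloring. The necessary condition stated in the proposition is exactly the contrapositive of this implication.

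The steps requiring care are purely bookkeeping. I must note that distinct edges of the VN graph correspond to distinct degree-two checks, which holds because each degree-two check joins a unique pair of variable nodes and so determines a single edge; and that the checks incident to a fixed variable node split into singletons across the block-rows, which is exactly the one-nonzero-per-column property of CPMs together with full connectivity. The main obstacle is therefore not combinatorial depth but phrasing this correspondence precisely, since once the block-row labeling is in place the $m$-edge-coloring is forced by the quasi-cyclic structure.
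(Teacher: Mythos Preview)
Your argument is correct. Note, however, that the paper does not supply its own proof of this proposition: it is quoted from the companion paper \cite{TCOMM2} by the same authors, so there is no in-paper proof to compare against line by line. That said, the block-row labeling you describe is exactly the intended mechanism behind the result. In a fully connected $(m,n)$-regular QC-LDPC code every block is a CPM, so each variable node has precisely one check neighbor in each of the $m$ block-rows; assigning to every degree-two check (equivalently, every VN-graph edge) the index of its block-row therefore forces distinct colors on edges meeting at a common variable node. Your bookkeeping remarks about the bijection between degree-two checks and VN-graph edges and about the one-nonzero-per-column property of CPMs are the right points to flag, and nothing further is needed.
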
 

We note that all of our results are applicable  to fully connected QC-LDPC codes whose exponent matrices are free of $\infty$ elements. For example,  LDPC codes with column weight $2\ell$ and girth 6 include $(2\ell+1,0)$, $(2\ell+1,2)$ ETSs, whereas Example \ref{EX4} and Corollary \ref{EX5}, proved in \cite{TCOMM2}, show that fully connected $(2\ell,n)$-regular QC-LDPC codes with girth 6 contain no	$(2\ell+1,0)$, $(2\ell+1,2)$ ETSs.  In order to clarify these consequences we also need other definitions in graph theory such as complete graph and independent sets.
\begin{Definition}
	A complete  graph is a graph in which every pair of distinct vertices are connected by a unique edge. A complete  graph on $n$ vertices is denoted by $K_{n}$.   
\end{Definition}

\begin{Definition}
	An independent edge set, also called matching, of a graph, $G$, is a subset of the edges such that no two edges in the subset share a vertex in. An independent edge set with the maximum cardinality is called a maximum independent edge set whose cardinality is denoted by $\alpha'(G)$.
\end{Definition}
\begin{Example}\label{EX4}
	A complete graph on $2\ell+1$ vertices is equivalent to the VN graph corresponding to a $(2\ell+1,0)$ ETS with girth 6. It is proved that the chromatic index of a complete graph on $2\ell+1$ vertices is  ${\mathcal{X'}}(K_{2\ell+1})=2\ell+1$.  Therefore, Proposition \ref{Prop1} proves that the Tanner graph of a fully connected $(2\ell,n)$-regular QC-LDPC code with girth 6 contains no $(2\ell+1,0)$ ETS, because its VN graph has no $2\ell$-edge-coloring.
\end{Example}

\begin{Theorem}\label{green}
	Suppose $E(G)$ is the set of edges in a graph $G$. If $|E(G)|>\alpha'(G)\Delta(G)$, then ${\mathcal{X'}}=\Delta(G)+1$, \cite{Green}. 
\end{Theorem}

\begin{Corollary}\label{EX5}
	A fully connected $(2\ell,n)$-regular QC-LDPC code with girth 6 contains no  $(2\ell+1,2)$ ETS.
\end{Corollary}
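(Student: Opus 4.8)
The plan is to mirror the argument of Example \ref{EX4} for the $(2\ell+1,0)$ case, now accounting for the two degree-one check nodes. By Proposition \ref{Prop1}, it suffices to show that the VN graph $G$ of a $(2\ell+1,2)$ ETS with girth $6$ admits no $2\ell$-edge-coloring; equivalently, that $\mathcal{X'}(G)=2\ell+1$.

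First I would pin down the structure of $G$. Since the code is $(2\ell,n)$-regular, each of the $2\ell+1$ variable nodes is incident with exactly $2\ell$ check nodes of the ETS, so the total number of variable-node edge endpoints is $(2\ell+1)(2\ell)$. The two unsatisfied checks have degree one and all remaining checks have degree two, so counting endpoints from the check side gives $2\cdot 1 + 2\,|E(G)| = (2\ell+1)(2\ell)$, whence $|E(G)| = \ell(2\ell+1)-1 = 2\ell^2+\ell-1$. The girth-$6$ hypothesis forbids $4$-cycles in the Tanner graph, so no two variable nodes share two common degree-two checks and $G$ is a \emph{simple} graph on $2\ell+1$ vertices. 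A variable node loses one unit of VN-degree for each incident degree-one check; since there are only two such checks, at least $2\ell-1\ge 1$ vertices retain VN-degree $2\ell$, while none can exceed $2\ell$. Hence $\Delta(G)=2\ell$.

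Next I would bound the matching number. Because $G$ has an odd number $2\ell+1$ of vertices, every matching leaves at least one vertex uncovered, so $\alpha'(G)\le \ell$. Therefore $\alpha'(G)\,\Delta(G)\le 2\ell^2$, and for $\ell\ge 2$ we obtain
\begin{align*}
|E(G)| = 2\ell^2+\ell-1 > 2\ell^2 \ge \alpha'(G)\,\Delta(G).
\end{align*}
Theorem \ref{green} then forces $\mathcal{X'}(G)=\Delta(G)+1=2\ell+1$, so by Vizing's bound (Theorem \ref{Vising}) the graph is of class $2$ and has no $2\ell$-edge-coloring. Applying Proposition \ref{Prop1} yields that the Tanner graph contains no $(2\ell+1,2)$ ETS.

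The delicate points are the bookkeeping for $\Delta(G)$ — one must check that both placements of the two degree-one checks (on distinct variable nodes, or both on a single variable node) still yield $\Delta(G)=2\ell$ — and the fact that the strict inequality $|E(G)|>\alpha'(G)\Delta(G)$ requires $\ell\ge 2$, which is exactly the regime of interest since column weight $2\ell=2$ is degenerate. I expect the main conceptual step to be recognizing that the two edges removed relative to the complete graph $K_{2\ell+1}$ of Example \ref{EX4} are too few to drop the edge count below the $\alpha'\Delta$ threshold, so the parity obstruction that renders $K_{2\ell+1}$ class $2$ survives.
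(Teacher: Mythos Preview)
Your argument is correct and is essentially the one the paper intends: the corollary is stated immediately after Theorem \ref{green} precisely because the proof consists of checking $|E(G)|>\alpha'(G)\Delta(G)$ for the VN graph of the $(2\ell+1,2)$ ETS and then invoking Proposition \ref{Prop1}, just as you do. Your explicit edge count $|E(G)|=2\ell^2+\ell-1$, the parity bound $\alpha'(G)\le\ell$, and the verification that $\Delta(G)=2\ell$ are exactly the pieces that flesh out what the paper leaves implicit.
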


A straight consequence of the above discussions is as follows. Fully connected $(4,n)$-regular QC-LDPC codes with girth 6 contain no	$(5,0)$, $(5,2)$ ETSs.

\section{QC-LDPC Codes With Girth 6, Column Weight 4 and Free of Small ETSs}\label{IV}

The smallest size of $(a, b)$ ETSs in Tanner graph of a $(4,n)$-regular QC-LDPC code with
girth 6 is $a = 5$, \cite{farzane}. In this section using Vizing’s theorem in graph theory, \cite{Bondy}, we consider
sufficient conditions to remove ETSs with small size. In fact, our goal is to construct $(4,n)$-
regular QC-LDPC code with girth 6 whose Tanner graphs are free of $(5,b)$ ETSs, where $b\leq4$,
and $(6,b)$ ETSs, where $b\leq2$. In \cite{TCOMM2}, we proved that Tanner graph of fully connected  $(4,n)$-
regular QC-LDPC code with girth 6 is free of $(5,0)$ and $(5, 2)$ ETSs. Therefore, to reach our goal
we focus on fully connected QC-LDPC codes and provide the sufficient condition to remove
three ETSs with parameters $(5,4),\ (6, 0)$ and $(6,2)$. Hereafter, 6-cycles obtained from three rows with
row indices $i, j, k$, in an exponent matrix is denoted by 6-cycle$_{\{i,j,k\}}$.
\begin{lemma}\label{lemma1}
	Suppose $B$ is the exponent matrix of a fully connected $(4,n)$-regular QC-LDPC
	code. The sufficient condition to remove $(5,4)$ ETSs in Tanner graph is to avoid 6-cycle$_{\{1,2,3\}}$ and 6-cycle$_{\{1,2,4\}}$.

\end{lemma}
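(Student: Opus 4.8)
The plan is to reduce the lemma to a finite statement about proper $4$-edge-colorings of the VN graph of a $(5,4)$ ETS, and then settle the two graphs that can occur by a short counting argument.

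First I would determine the VN graph. Each of the five variable nodes has degree $4$ and there are $b=4$ degree-one checks, so counting edge-endpoints on the check side gives $4\cdot 1+2c_2=5\cdot 4$ and the number of degree-two checks is $c_2=8$. Girth $6$ rules out $4$-cycles, so no two degree-two checks join the same pair of variable nodes and the VN graph is a \emph{simple} graph on $5$ vertices with $8$ edges, that is, $K_5$ minus two edges. A vertex incident to both deleted edges drops only to degree $2$, so the minimum degree is $2$; and since two edges meet at most four vertices, at least one vertex keeps degree $4$. Hence $\Delta=4$, and exactly two graphs arise: deleting two adjacent edges (degrees $2,3,3,4,4$) or two disjoint edges (degrees $3,3,3,3,4$, i.e.\ a $4$-cycle together with a central vertex joined to all four). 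By Proposition~\ref{Prop1} the ETS can appear only if this VN graph admits a proper $4$-edge-coloring, the four colors being the row indices $1,2,3,4$; the four edges at a degree-$4$ vertex force all four colors to be used.

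Next I would pass to the coloring language. Coloring every VN-edge by the row of $B$ containing its (degree-two) check is, in a fully connected code, a proper $4$-edge-coloring, because a variable node meets exactly one check in each row. A triangle of the VN graph is a $6$-cycle of the Tanner graph, and its three mutually adjacent edges carry three distinct colors, so a triangle colored $\{i,j,k\}$ is exactly a 6-cycle$_{\{i,j,k\}}$. Therefore avoiding 6-cycle$_{\{1,2,3\}}$ and 6-cycle$_{\{1,2,4\}}$ is the same as forbidding every proper row-coloring that has a triangle containing both colors $1$ and $2$. The lemma thus reduces to the claim that \emph{in every proper $4$-edge-coloring of either VN graph some triangle contains both colors $1$ and $2$}, equivalently that \emph{no coloring can make every triangle contain both colors $3$ and $4$}.

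Finally I would prove this claim for each graph by a pigeonhole argument, assuming for contradiction that every triangle carries both $3$ and $4$. For the wheel, color $3$ lies on a single spoke, which meets only the two triangles at that rim vertex; the other two triangles would then need color $3$ on their rim edges, but those rim edges share the opposite rim vertex and cannot take the same color. For ``$K_5$ minus two adjacent edges'', I would split on the color $w$ of the edge $e$ joining the two degree-$4$ vertices, which lies in three triangles: if $w\notin\{3,4\}$ then each of these triangles must place both $3$ and $4$ on its two edges off $e$, forcing three edges at a degree-$4$ endpoint to use only colors $\{3,4\}$; if $w\in\{3,4\}$, the other of $\{3,4\}$ occurs at most once at each endpoint of $e$ and so reaches at most two of the three triangles. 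Every branch violates properness. The hard part is exactly this exhaustiveness: verifying that these are the only two VN graphs, and checking in each branch that the two edges forced to share a color really are adjacent---which is what a proper coloring forbids. Granting the claim, Proposition~\ref{Prop1} and the coloring/ETS equivalence for fully connected codes give that avoiding the two $6$-cycle types deletes every $(5,4)$ ETS.
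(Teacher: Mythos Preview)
Your proposal is correct and follows essentially the same route as the paper: both reduce the lemma via Proposition~\ref{Prop1} to showing that neither of the two possible VN graphs of a $(5,4)$ ETS admits a proper $4$-edge-coloring whose triangles avoid the color sets $\{1,2,3\}$ and $\{1,2,4\}$. Your version is in fact more complete: you derive the two VN graphs ($K_5$ minus a matching and $K_5$ minus a path) from the degree count rather than asserting them, and your reformulation ``every triangle must carry both colors $3$ and $4$'' together with the pigeonhole arguments on the spoke/rim and on the edge between the two degree-$4$ vertices makes the impossibility explicit, whereas the paper only sketches the case analysis; the paper additionally observes that for the wheel type forbidding $\{1,2,3\}$ alone already suffices, a refinement you do not need for the lemma as stated.
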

\begin{proof}
There are two non-isomorphic $(5,4)$ ETSs. We call them type 1 and type 2 . Suppose row indices of the exponent matrix are 1, 2, 3, and 4. According
to Proposition \ref{Prop1}, since the edges of VN graphs corresponding to these ETSs can be colored by
these four labels, Tanner graph contains these ETSs. Our goal is to construct fully connected
$(4,n)$-regular QC-LDPC codes with girth 6 whose Tanner graphs are free of $(5,4)$ ETSs.

Each triangle in the VN graph is equivalent to a 6-cycle in Tanner graph. So, avoiding 6-cycle$_{\{1,2,3\}}$ is equivalent to coloring edges of VN graphs such that no
triangles is colored by three labels 1, 2 and 3. If we color edges of the VN graph corresponding to type 1, then by avoiding triangles with labels 1, 2 and 3 we end up with a variable node with two adjacent edges with the same color. Different ways to color edges of the VN graph type 1 can be illustrated in which a variable-node $v$ is connected to four variable nodes $v_1, v_2, v_3$ and $v_4$. None of the coloring methods results in a 4-edge coloring. Since, a 4-edge coloring with the mentioned restriction is impossible, Proposition \ref{Prop1} proves that avoiding 6-cycle$_{\{1,2,3\}}$ results in Tanner graphs which are free of the $(5,4)$ ETS, type 1. A 4-edge-coloring for the VN graph of type 2 can be obtained which contains no triangles with labels 1, 2 and 3. Therefore, in order to remove this kind of ETSs we also avoid 6-cycle$_{\{1,2,4\}}$. Suppose edges of a triangle in the VN graph of an ETS is colored by one of the following triples, $(1, 2, 3),\ (1, 2, 4),\ (1, 3, 4),\ (2, 3, 4)$.
Avoiding 6-cycle$_{\{1,2,3\}}$ and 6-cycle$_{\{1,2,4\}}$, is equivalent to coloring the edges of the VN graph in which triangles are not colored with triples $(1, 2, 3)$ or $(1, 2, 4)$. Since, coloring edge of the VN graph of type 2 with 4 colors whose triangles are colored with triples $(1, 3, 4),\ (2, 3, 4)$ is impossible, Proposition \ref{Prop1} proves that, with this limitation all of ETSs with a VN graph of type 2 are removed from Tanner graph.
\end{proof}

\begin{lemma}\label{lemma2}
	If Tanner graph of a $(4, n)$-regular QC-LDPC code with girth 6 contains no $(5, 4)$
	ETSs, then it contains no $(6, 0)$ ETSs too.
\end{lemma}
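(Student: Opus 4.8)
The plan is to prove the contrapositive by showing that every $(6,0)$ ETS necessarily contains a $(5,4)$ ETS as an induced subgraph, so that forbidding the latter forces the absence of the former. The first step is to pin down the VN graph of a $(6,0)$ ETS. Because the set has no degree-one check nodes and the subgraph is elementary, every check node in it has degree exactly two; with column weight $4$ this means each of the six variable nodes contributes four edges to the VN graph, so the VN graph is $4$-regular on six vertices. Girth $6$ forbids two variable nodes from sharing two check nodes (that would give a $4$-cycle), hence the VN graph is simple. Now the complement of a $4$-regular simple graph on six vertices is $1$-regular, i.e.\ a perfect matching, and a perfect matching on six vertices is unique up to isomorphism. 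Consequently the VN graph of any $(6,0)$ ETS is forced to be $K_6$ minus a perfect matching, namely the octahedron $K_{2,2,2}$.

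The second step is to delete one variable node $v$ and read off the induced subgraph on the remaining five. In the octahedron $v$ is non-adjacent to exactly one vertex (its antipode) and adjacent to the other four. Each of the four degree-two check nodes incident to $v$ retains a single surviving neighbor and therefore becomes a degree-one (unsatisfied) check node, while the check nodes joining two surviving variable nodes keep degree two. A short count gives five variable nodes, eight degree-two check nodes and four degree-one check nodes, so the resulting configuration is a $(5,4)$ ETS: the antipode keeps degree $4$ and carries no degree-one check, whereas the four former neighbors of $v$ drop to degree $3$ and each acquires exactly one attached degree-one check node, matching the required column weight.

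The third step is to observe that this configuration is genuinely an induced subgraph of the original Tanner graph, since the check nodes it retains are precisely those adjacent to at least one of the five surviving variable nodes. Hence a $(5,4)$ ETS sits inside every $(6,0)$ ETS, and therefore inside the Tanner graph whenever a $(6,0)$ ETS occurs. By Lemma~\ref{lemma1} the hypothesis already excludes both non-isomorphic $(5,4)$ ETSs, so the presence of a $(6,0)$ ETS would manufacture a forbidden $(5,4)$ ETS, a contradiction; thus no $(6,0)$ ETS can occur.

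I expect the main obstacle to be the structural step rather than the counting: one must argue rigorously that the VN graph of \emph{every} $(6,0)$ ETS is forced to be the octahedron (via uniqueness of the $4$-regular simple graph on six vertices) and that the vertex-deleted subgraph is a bona fide induced $(5,4)$ ETS matching one of the two types of Lemma~\ref{lemma1}, not merely a graph with the correct vertex and edge counts.
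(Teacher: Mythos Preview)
Your proof is correct and follows the same approach as the paper---showing that deleting any variable node from a $(6,0)$ ETS yields a $(5,4)$ ETS---though you supply the structural details (identifying the VN graph as the octahedron $K_{2,2,2}$) that the paper's one-line proof merely asserts. One minor point: the appeal to Lemma~\ref{lemma1} in your final step is unnecessary, since the hypothesis of Lemma~\ref{lemma2} already assumes directly that no $(5,4)$ ETS is present.
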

\begin{proof}
	Removing each variable node from the VN graph of a $(6, 0)$ ETS results in the VN
	graph corresponding to a $(5, 4)$ ETS. Therefore, removing $(5, 4)$ ETSs causes the removal of
	$(6, 0)$ ETSs.
\end{proof}
\begin{lemma}\label{lemma3}
	Suppose $B$ is the exponent matrix of a fully connected $(4, n)$-regular QC-LDPC
	code. The sufficient condition to remove $(6, 2)$ ETSs in Tanner graph is to avoid 6-cycle$_{\{1,2,3\}}$ and 6-cycle$_{\{1,2,4\}}$.
\end{lemma}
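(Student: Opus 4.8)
The plan is to avoid re-deriving a coloring obstruction from scratch and instead reduce the $(6,2)$ case to the $(5,4)$ case already settled in Lemma~\ref{lemma1}, in the same spirit as Lemma~\ref{lemma2}. First I would pin down the shape of the VN graph of a $(6,2)$ ETS. Since every variable node has degree $4$ and there are exactly $b=2$ degree-one checks, the number of degree-two checks is $(6\cdot 4-2)/2=11$, so the VN graph has $6$ vertices and $11$ edges; girth $6$ (no $4$-cycles) forbids two degree-two checks from sharing both endpoints, so this VN graph is simple. For a vertex $v$ let $p_v$ be the number of degree-one checks incident to $v$; then its VN-degree is $4-p_v$ and $\sum_v p_v=2$. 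Hence only two degree patterns are possible: either two distinct vertices carry one degree-one check each (VN-degrees $4,4,4,4,3,3$), which I call Case A, or a single vertex carries both degree-one checks (VN-degrees $4,4,4,4,4,2$), Case B.

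Next I would track how deleting a variable node changes the parameters. Deleting a vertex $v$ removes it together with its $p_v$ degree-one checks, while each of its $4-p_v$ incident degree-two checks loses one endpoint and becomes a new degree-one check. Thus from the $(6,2)$ ETS one obtains a $(5,\,2-p_v+(4-p_v))=(5,\,6-2p_v)$ ETS, realized as the induced subgraph of the Tanner graph on the remaining five variable nodes (all of whose checks still have degree $1$ or $2$).

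In Case A I would delete one of the two vertices with $p_v=1$; this gives $6-2p_v=4$, i.e.\ a $(5,4)$ ETS sitting inside the Tanner graph. By Lemma~\ref{lemma1}, avoiding 6-cycle$_{\{1,2,3\}}$ and 6-cycle$_{\{1,2,4\}}$ removes every $(5,4)$ ETS (both non-isomorphic types), contradicting its existence; hence no Case A $(6,2)$ ETS survives under these conditions. In Case B I would delete the unique vertex with $p_v=2$, giving $6-2p_v=2$, i.e.\ a $(5,2)$ ETS. But Corollary~\ref{EX5} (the stated consequence that a fully connected $(4,n)$-regular girth-$6$ code contains no $(5,2)$ ETS) already rules this out, so Case B cannot occur either, independently of the cycle conditions. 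Combining the two cases, the Tanner graph contains no $(6,2)$ ETS once both families of $6$-cycles are avoided.

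The main obstacle I anticipate is not a hard computation but getting the bookkeeping exactly right: verifying that the induced subgraph left after deleting one variable node is genuinely an ETS, and that the parameter update $b\mapsto b-p_v+(4-p_v)$ is correct, so that Case A really lands on $(5,4)$ and Case B on $(5,2)$. A secondary point to state carefully is that the two cases exhaust all placements of the two degree-one checks, which follows from $\sum_v p_v=2$ together with the girth-$6$ guarantee that the VN graph is simple; once this is in place, invoking Lemma~\ref{lemma1} for Case A and Corollary~\ref{EX5} for Case B closes the argument.
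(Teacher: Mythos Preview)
Your argument is correct, and it takes a genuinely different route from the paper. The paper proceeds by a direct edge-coloring analysis: it lists three non-isomorphic $(6,2)$ VN graphs, observes that two of them contain a copy of the $(5,4)$ type-2 VN graph (hence inherit its obstruction to a $4$-edge-coloring avoiding $(1,2,3)$ and $(1,2,4)$ triangles), and handles the remaining $K_4$-free one by an ad hoc coloring chase around the degree-$4$ vertex. You instead split only by degree sequence and use vertex deletion in the Tanner graph to land on a smaller forbidden ETS: Case A (sequence $4,4,4,4,3,3$) drops to a $(5,4)$ ETS via Lemma~\ref{lemma1}, and Case B (sequence $4,4,4,4,4,2$) drops to a $(5,2)$ ETS ruled out by Corollary~\ref{EX5}. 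The bookkeeping $b\mapsto b-p_v+(4-p_v)=6-2p_v$ is right, and the induced subgraph on the five remaining variable nodes is indeed an ETS since every surviving check keeps degree at most $2$.

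What each approach buys: your reduction is cleaner, uniform with the proof of Lemma~\ref{lemma2}, avoids enumerating isomorphism classes, and as a bonus shows that the Case~B structure is already impossible in any fully connected $(4,n)$-regular girth-$6$ code, independent of the extra $6$-cycle constraints. The paper's coloring argument is more hands-on and self-contained at the level of the VN graph, but pays for it with a case-by-case analysis that your deletion trick sidesteps. One minor remark: the simplicity of the VN graph is not actually needed for your deletion argument, which operates directly on check degrees in the Tanner graph; it only matters if you want to speak of the VN graph as a simple graph, so you could drop or downplay that step.
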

\begin{proof}
	Similar to the proof of Lemma \ref{lemma1}, by avoiding 6-cycles from the first three rows and 6-cycles obtained from the first, second and last rows of $B$, edges of a triangle in the VN graph	of an ETS can be colored by a triple $(1, 3, 4)$ or $(2, 3, 4)$. There are three non-isomorphic $(6, 2)$ ETSs. The VN graphs of two of them which contain a complete graph $K_4$ have the $(5, 4)$ ETS with a VN graph of type 2 as their subgraph. Since a 4-edge-coloring for the
	VN graph, type 2, without triangles with colors $(1, 2, 3)$ and $(1, 2, 4)$ is impossible, there is no	4-edge coloring for the VN graph of a $(6, 2)$ ETS by the mentioned restriction. 
	
	If the VN graph contains no $K_4$, then contains a variable node, $v$, of degree 4 which is connected to four variable nodes $v_1, v_2, v_3$ and $v_4$. These five variable nodes contain four triangles. We color edges connected to $v$ such that edges with colors 1 and 2 are not in a triangle, because $(1, 2, 3)$ or $(1, 2, 4)$ are abandoned. By this assumption and continuing the process of coloring the edges we end up with a variable node with two adjacent edges with the same color.
	
	Generally, the removal of 6-cycle$_{\{1,2,3\}}$ and 6-cycle$_{\{1,2,4\}}$ causes the non-existence of 4-edge coloring for the VN graph of $(6, 2)$ ETSs. Consequently, Proposition 1, proves Tanner graph is free of $(6, 2)$ ETSs.
\end{proof}

\begin{Theorem}
	The sufficient condition to have $(4, n)$-regular QC-LDPC codes with girth 6 whose
	Tanner graphs are free of $(5, b)$ ETSs, where $b\leq4$, and $(6, b)$ ETSs, where $b\leq2$ is to construct fully connected exponent matrices in which 6-cycle$_{\{1,2,3\}}$ and 6-cycle$_{\{1,2,4\}}$ are avoided.
\end{Theorem}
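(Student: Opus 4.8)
The plan is to recognize that this Theorem is a consolidation of the results already established in Lemmas \ref{lemma1}, \ref{lemma2}, and \ref{lemma3}, together with the preliminary observations that fully connected $(4,n)$-regular QC-LDPC codes with girth 6 automatically contain no $(5,0)$ and no $(5,2)$ ETSs via Example \ref{EX4} and Corollary \ref{EX5}. The first step I would take is a parity count to confirm that the listed cases are exhaustive. Every variable node in the ETS has degree 4, so for an $(a,b)$ ETS the total number of edge-endpoints at check nodes is $4a$; since each satisfied (degree-two) check contributes an even number and each unsatisfied (degree-one) check contributes one, reducing modulo 2 gives $b \equiv 4a \equiv 0 \pmod 2$. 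Hence for $a=5$ the admissible values are $b \in \{0,2,4\}$ and for $a=6$ they are $b \in \{0,2\}$, so the statement concerns exactly the finite list $\{(5,0),(5,2),(5,4),(6,0),(6,2)\}$.

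Next I would dispatch each case. The $(5,0)$ and $(5,2)$ ETSs are excluded unconditionally: by Example \ref{EX4} the VN graph of a $(5,0)$ ETS is the complete graph $K_5$, whose chromatic index is $5>4$, so Proposition \ref{Prop1} forbids its occurrence, and Corollary \ref{EX5} rules out the $(5,2)$ ETS; neither case needs any hypothesis on the exponent matrix. For $(5,4)$, Lemma \ref{lemma1} shows that avoiding 6-cycle$_{\{1,2,3\}}$ and 6-cycle$_{\{1,2,4\}}$ makes a $4$-edge-coloring of both non-isomorphic VN graphs impossible, so Proposition \ref{Prop1} again forbids the ETS. The $(6,0)$ case then follows from Lemma \ref{lemma2}, since deleting any variable node from the VN graph of a $(6,0)$ ETS produces the VN graph of a $(5,4)$ ETS, and the $(6,2)$ case follows directly from Lemma \ref{lemma3} under the same two cycle-avoidance conditions.

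Finally I would assemble these conclusions. The only hypothesis invoked anywhere in the above is the avoidance of 6-cycle$_{\{1,2,3\}}$ and 6-cycle$_{\{1,2,4\}}$, and this single pair of constraints simultaneously removes the $(5,4)$, $(6,0)$, and $(6,2)$ ETSs, while the $(5,0)$ and $(5,2)$ ETSs are absent by default. Therefore imposing these two constraints on a fully connected exponent matrix suffices to guarantee a girth-6 Tanner graph free of all $(5,b)$ ETSs with $b\leq 4$ and all $(6,b)$ ETSs with $b\leq 2$, which is the claim.

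Since every subcase has already been settled in the preceding lemmas and examples, I do not expect a genuine obstacle here — the Theorem is essentially a bookkeeping statement that unifies the earlier results under one sufficient condition. The one point demanding care is the parity observation restricting $b$ to even values, since it is precisely this step that certifies the case list is complete and that no odd-$b$ ETS has been silently left unaddressed.
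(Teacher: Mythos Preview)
Your proposal is correct and follows essentially the same approach as the paper, which simply invokes Lemmas \ref{lemma1}, \ref{lemma2}, and \ref{lemma3}. Your version is in fact more careful than the paper's one-line proof: you explicitly account for the $(5,0)$ and $(5,2)$ cases via Example \ref{EX4} and Corollary \ref{EX5} (which the paper handles only in the surrounding discussion, not in the proof itself), and you add the parity argument showing $b$ must be even, thereby certifying that no case has been omitted.
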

\begin{proof}
	The three Lemmas \ref{lemma1}, \ref{lemma2} and \ref{lemma3} result in the fully connected $(4, n)$-regular QC-LDPC codes with the mentioned property.
\end{proof}

In Table \ref{Tabel}, we provide the exponent matrices to construct fully connected $(4, n)$-regular QCLDPC
codes with girth 6 whose Tanner graphs are free of $(5, b)$ ETSs, where $b\leq4$, and $(6, b)$
ETSs, where $b\leq2$.

\section{QC-LDPC Codes With Girth 6, Column Weight 4 and Free of Small ETSs}\label{V}
In \cite{farzane} we proved that the smallest elementary trapping set in a $(4, n)$-regular QC-LDPC code
with girth 8 is a $(7, 4)$ ETS and we took a complete bipartite graph $K_{3,4}$ as the VN graph of this ETS. In this section, we first prove the number of non-isomorphic $(7, 4)$ ETSs is one only. Then we propose a method to remove them from Tanner graphs.
\begin{Proposition}\label{Prop2}
	The VN graph of a $(7, 4)$ ETS is a bipartite graph.
\end{Proposition}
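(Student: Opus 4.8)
The plan is to determine the VN graph essentially completely by a short counting argument combined with the extremal characterization of triangle-free graphs; bipartiteness then falls out immediately.

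First I would fix the basic parameters of the VN graph $G$. By definition a $(7,4)$ ETS has $a=7$ variable nodes, so $G$ has exactly $7$ vertices. Since the code is $(4,n)$-regular, every variable node has degree $4$ in the Tanner graph, producing $7\cdot 4=28$ variable-to-check incidences. Because the trapping set is elementary with $b=4$ unsatisfied checks, exactly $4$ check nodes have degree $1$ and all remaining check nodes have degree $2$. The degree-one checks absorb $4$ incidences, leaving $28-4=24$ incidences distributed over the degree-two checks, hence $24/2=12$ degree-two checks. As each degree-two check contributes one edge to the VN graph, $G$ has exactly $12$ edges.

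Next I would record the two structural constraints forced by girth $8$. Two variable nodes joined by a pair of distinct degree-two checks would close a $4$-cycle in the Tanner graph, and (as noted in the excerpt) a triangle in $G$ corresponds to a $6$-cycle in the Tanner graph; girth $8$ forbids both. Thus $G$ is a \emph{simple, triangle-free} graph on $7$ vertices with $12$ edges. The decisive step is then to invoke Mantel's theorem: a triangle-free graph on $n$ vertices has at most $\lfloor n^2/4\rfloor$ edges, with equality only for the balanced complete bipartite graph $K_{\lfloor n/2\rfloor,\lceil n/2\rceil}$. For $n=7$ this bound equals $\lfloor 49/4\rfloor=12$, which is precisely the edge count of $G$. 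Hence $G$ attains the extremal bound and must coincide with $K_{3,4}$; in particular $G$ is bipartite, which is the claim, and as a bonus this pins the VN graph down as $K_{3,4}$, consistent with the description used throughout the rest of the section.

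I do not anticipate a genuine obstacle. The only points requiring care are verifying the edge count—which rests on correctly separating the $4$ degree-one checks from the degree-two checks—and explicitly using the \emph{uniqueness} clause of the extremal case, since it is uniqueness, not merely the edge bound, that forces the complete bipartite structure and thereby bipartiteness.
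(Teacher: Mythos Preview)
Your argument is correct, and it is genuinely different from the paper's proof. The paper establishes bipartiteness by directly excluding odd cycles: girth $8$ kills triangles, and then two short ad hoc structural arguments (using that at least three vertices of the VN graph must have degree $4$) rule out $5$-cycles and $7$-cycles separately. Your route instead computes the edge count once ($12$ edges on $7$ vertices), records that the graph is simple and triangle-free, and then finishes in one stroke via the equality case of Mantel's theorem. This is cleaner and has the pleasant side effect of proving not only Proposition~\ref{Prop2} but also Proposition~\ref{Prop3} (that the VN graph is exactly $K_{3,4}$) simultaneously, whereas the paper handles the uniqueness statement by a further case analysis on the three possible degree sequences. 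The paper's approach, on the other hand, is self-contained in the sense that it never appeals to an external extremal theorem; if one wanted to avoid citing Mantel, the direct odd-cycle exclusion is the natural fallback.
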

\begin{proof}
	A graph, $G$, is a bipartite graph if and only if it has no odd cycle. Therefore, it is
	sufficient to prove the VN graph, $G$, is free of odd cycles. Since Tanner graph has girth 8, every ETS is 6-cycle free which results in a triangle-free VN graph.
	
	Since a $(7, 4)$ ETS has four 1-degree check-nodes, at least three variable nodes of $G$ have degree 4. Let the VN graph have a 5-cycle. To draw such VN graph we first depict a 5-cycle, $C_5$. Connecting two nonadjacent vertices on $C_5$ causes a triangle. Therefore, there is no connection between nonadjacent vertices on $C_5$. There are two variable nodes outside of this cycle. To avoid triangles, each of these two variable nodes can be connected to at most two vertices on $C_5$. Therefore, the maximum number of variable nodes with degree 4 is at most 2, which is a contradiction. Therefore the VN graph is 5-cycle free.
	
	Suppose the VN graph has a 7-cycle. In order to depict $G$ we first draw a 7-cycle, $C_7$. Let three variable nodes $u, v_1, v_2$ be on $C_7$, $u$ has degree 4 and $v_1, v_2$ are the neighbors of $u$ such that $uv_1$ and $uv_2$ are not edges of the cycle. The existence of these two adjacent edges $uv_1$ and $uv_2$ causes a triangle which is impossible. Therefore the VN graph is 7-cycle free.
\end{proof}
\begin{Proposition}\label{Prop3}
	The number of non-isomorphic $(7, 4)$ ETSs is one.
\end{Proposition}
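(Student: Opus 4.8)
The plan is to reduce the statement to a short extremal/counting argument, using Proposition~\ref{Prop2} to know that the VN graph $G$ of the $(7,4)$ ETS is bipartite. First I would pin down the number of edges of $G$. Since the code has column weight $4$, each of the $7$ variable nodes of the ETS touches exactly $4$ check nodes, giving $7\times 4=28$ variable-to-check connections. The four degree-one (unsatisfied) check nodes absorb $4$ of these connections, and every remaining check node has degree $2$; hence there are $(28-4)/2=12$ degree-two check nodes, i.e.\ $|E(G)|=12$. I would also note that $G$ is simple, because a double edge between two variable nodes would mean two degree-two checks joining the same pair, producing a $4$-cycle in the Tanner graph, which is excluded by girth $8$.

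Next I would exploit bipartiteness. Writing the bipartition as $(X,Y)$ with $|X|+|Y|=7$, we have $|E(G)|\le |X|\,|Y|$. Maximizing the product $|X|(7-|X|)$ over $|X|=1,2,3$ yields the values $6,10,12$, so the only way to reach the required $12$ edges is $\{|X|,|Y|\}=\{3,4\}$ together with \emph{every} cross edge present. This forces $G\cong K_{3,4}$; and since the bound $|E(G)|\le\lfloor 7^2/4\rfloor=12$ holds for any $2$-colouring of a bipartite graph on $7$ vertices, equality also rules out a disconnected $G$. Thus the VN graph is uniquely $K_{3,4}$.

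Finally I would recover the whole ETS, not merely its VN graph, since an isomorphism of the trapping set must also respect the placement of the four degree-one checks. In $K_{3,4}$ the three vertices of the small side have VN-degree $4$, equal to the column weight, so they carry no degree-one check; the four vertices of the large side have VN-degree $3$, so each has exactly one remaining connection, which must be a degree-one check. This assignment is forced, so the full $(7,4)$ ETS is determined up to isomorphism by $G=K_{3,4}$, which proves the count is one.

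The argument is essentially a single extremal inequality, so I do not anticipate a deep obstacle; the step that most needs care is the last one, namely verifying that the VN graph alone determines the ETS. A priori, different positions of the unsatisfied checks could produce several non-isomorphic trapping sets sharing the same VN graph, and the key point is that the column-weight-$4$ constraint, combined with the degree sequence of $K_{3,4}$, leaves no freedom in where those degree-one checks attach.
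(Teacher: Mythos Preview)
Your proof is correct, and it takes a different route from the paper's own argument. The paper proceeds by enumerating the possible degree sequences of the VN graph according to how the four degree-one check nodes are distributed among the seven variable nodes (giving sequences such as $(4,4,4,3,3,3,3)$, $(4,4,4,4,4,2,2)$, etc.), and then uses bipartiteness to eliminate every sequence except $(4,4,4,3,3,3,3)$, which yields $K_{3,4}$. Your argument bypasses this case analysis entirely: you count $|E(G)|=12$ directly from the column weight and the number of unsatisfied checks, and then the extremal bound $|E(G)|\le |X|\,|Y|\le 3\cdot 4=12$ for any bipartition of seven vertices forces $G\cong K_{3,4}$ at equality. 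This is shorter and cleaner, and it also makes explicit two points the paper leaves implicit: that $G$ is simple (a double edge would create a $4$-cycle in the Tanner graph, violating girth $8$), and that the placement of the four degree-one checks is itself forced once the VN graph is known. The paper's degree-sequence enumeration, by contrast, has the minor advantage of showing \emph{which} distributions of unsatisfied checks are being excluded, which may be handy if one later wants to adapt the argument to other $(a,b)$ parameters.
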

\begin{proof}
As we mentioned in Proposition 2\ref{Prop2}, the VN graph of a $(7, 4)$ ETS, denoted by $G$, is a bipartite graph. There are three types of degree sequences for the variable nodes of $G$.
If an ETS contains four check nodes connected to four variable nodes, then $G$ has the degree
sequence $d_1 = \{4,4,4,3,3,3,3\}$. In this case, $G$ is a complete bipartite graph $K_{3,4}$.

If an ETS has two check nodes connected to a variable node and two check nodes connected to
another variable node, then $G$ has the degree sequence $d_2 = \{4,4,4,4,4,2,2\}$. Since $G$ is a bipartite graph, the existence of a vertex of degree 4 requires a part to have at least four variable nodes. The maximum number of vertices in the other part is three. Therefore, the existence of five variable nodes of degree 4 is impossible.

 The third type of degree sequence happens when there are two check nodes connected to a variable node and two check nodes connected to two another disjoint variable nodes. In this
 case the degree sequence is $d_3 = \{4,4,4,3,3,2,2\}$. Since there are three vertices of degree 4 we put four variable nodes in one part and the other three nodes in the other part. In this case we cannot have the degree sequence $d_3$. Generally, the only possible case is d1 for which the variable node is $K_{3,4}$.
\end{proof}

The VN graph of a $(7, 4)$ ETS is 4-edge coloring. To provide a method to remove this ETS
from Tanner graph we propose some restriction in coloring the edge of the VN graph with four
labels $1, 2, 3, 4$. In fact, we provide conditions under which the VN graph is not 4-edge coloring.
\begin{Theorem}
	The sufficient condition to have Tanner graphs with girth 8 and free of $(7, 4)$
	ETSs is to avoid 8-cycles obtained from three rows in which the first row is used twice and the second row is used at least once.
\end{Theorem}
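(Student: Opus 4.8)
The plan is to argue by contradiction and reduce the statement to a combinatorial claim about proper edge-colorings of $K_{3,4}$. Suppose a fully connected $(4,n)$-regular code of girth $8$ whose exponent matrix avoids the stated $8$-cycles nonetheless contained a $(7,4)$ ETS. By Proposition \ref{Prop3} its VN graph is $K_{3,4}$, and assigning to each degree-$2$ check node (edge of the VN graph) the index of the row of the exponent matrix it comes from yields a proper $4$-edge-coloring of $K_{3,4}$ with colors $\{1,2,3,4\}$ (proper because a variable node meets at most one check node per row, so the four edges at each degree-$4$ vertex get distinct colors); this is the necessary condition of Proposition \ref{Prop1}. I would record such a coloring as a $3\times4$ array whose rows are indexed by the three degree-$4$ variable nodes $a_1,a_2,a_3$, whose columns are indexed by the four degree-$3$ variable nodes $b_1,b_2,b_3,b_4$, and whose $(i,j)$ entry $c(i,j)$ is the color of edge $a_ib_j$. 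Properness is then equivalent to saying each array-row is a permutation of $\{1,2,3,4\}$ while each array-column has three distinct entries, i.e.\ the array is a $3\times4$ Latin rectangle; the triangle-freeness forced by girth $8$ is consistent with the bipartiteness of Proposition \ref{Prop2}.

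Next I would translate the forbidden $8$-cycles into a local restriction on this array. Recall that a $4$-cycle of $K_{3,4}$ is a $2\times2$ subarray on rows $i,i'$ and columns $j,j'$, and the corresponding $8$-cycle of the Tanner graph ``uses'' the four colors $c(i,j),c(i',j),c(i',j'),c(i,j')$ read cyclically. In a proper coloring any repeated color must lie on the two opposite (non-adjacent) edges of the $4$-cycle, so an $8$-cycle in which the first row is used twice is exactly a $2\times2$ subarray whose two color-$1$ cells are antipodal, namely the cells carrying color $1$ in array-rows $i$ and $i'$; the requirement that the second row be used at least once then says color $2$ appears in one of the two remaining (off-diagonal) cells. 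Hence, writing $\pi(i)$ for the column of color $1$ in array-row $i$, avoiding all such $8$-cycles forces, for every pair $i\neq i'$, that neither cell $(i,\pi(i'))$ nor cell $(i',\pi(i))$ carries color $2$.

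The crux is a short counting argument showing this restriction is unsatisfiable. Color $1$ occupies one cell in each array-row, in three distinct columns $\pi(1),\pi(2),\pi(3)$, leaving a single column $u$ free of color $1$. Let $\rho(i)$ be the column of color $2$ in array-row $i$, so $\rho(i)\neq\pi(i)$. The restriction above gives $\rho(i)\neq\pi(i')$ for every $i'\neq i$, and together with $\rho(i)\neq\pi(i)$ this forces $\rho(i)\notin\{\pi(1),\pi(2),\pi(3)\}$, hence $\rho(i)=u$ for all three rows. But color $2$ occurs at most once per column, so $\rho$ is injective and its three values cannot all equal the single column $u$ --- a contradiction. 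Therefore no proper $4$-edge-coloring of $K_{3,4}$ survives the restriction, and Proposition \ref{Prop1} then rules out the $(7,4)$ ETS.

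I expect the main obstacle to be the faithful translation in the second step rather than the final counting. In particular one must read ``the second row is used at least once'' inclusively: it covers not only the genuine three-row patterns $(1,2,1,3)$ and $(1,2,1,4)$ but also the two-row pattern $(1,2,1,2)$, in which both off-diagonal cells carry color $2$. This boundary case is precisely what the counting argument consumes when it excludes $\rho(i)=\pi(i')$; dropping it would let the highly symmetric coloring given by the Cayley table of $\mathbb{Z}_2\times\mathbb{Z}_2$ slip through, since there every color-$1$ $4$-cycle has its two off-diagonal cells equal. Finally I would note that colors $3$ and $4$ never enter the argument, so the roles of ``first'' and ``second'' row may be played by any two fixed rows of the exponent matrix: the designer need only choose two rows and forbid the corresponding $8$-cycles.
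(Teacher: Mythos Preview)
Your proof is correct and follows essentially the same route as the paper: invoke Proposition~\ref{Prop3} to identify the VN graph with $K_{3,4}$, translate the forbidden $8$-cycles into the forbidden colour quadruples $(1,2,1,2),(1,2,1,3),(1,2,1,4)$ on $4$-cycles, and show that no proper $4$-edge-colouring of $K_{3,4}$ avoids them, then appeal to Proposition~\ref{Prop1}. The only difference is in the combinatorial core: the paper fixes the colouring of one row of $K_{3,4}$ and eliminates options for $u_2v_1,u_2v_2,u_3v_1,u_3v_2$ by explicit case analysis until a $(1,2,1,2)$ cycle is forced, whereas your position-function argument with $\pi$ and $\rho$ reaches the same contradiction more cleanly and makes transparent (as you note) that any two fixed rows can play the roles of ``first'' and ``second''.
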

\begin{proof}
	According to Proposition \ref{Prop3}, there is only one $(7, 4)$ ETS whose VN graph is a complete bipartite graph $K_{3,4}$. Suppose variable nodes in a part, $U$, of $K_{3,4}$ are denoted by $u_1, u_2, u_3$, and the ones in the other part, $V$ , are $v_1, v_2, v_3, v_4$. Without loss of generality, let the edges
	$u_1v_1, u_1v_2, u_1v_3, u_1v_4$ be colored by labels 1, 2, 3, 4, respectively. Our goal is to color the edges of $K_{3,4}$ such that no 4-cycle contains the label 1 twice and the label 2 at least once. In fact, no 4-cycle is colored by quadruples $(1,2,1,2),\ (1,2,1,3),\ (1,2,1,4)$, we denote these quadruples
	by a set $Q = \{(1,2,1,2),\ (1,2,1,3),\ (1,2,1,4)\}$. Since $K_{3,4}$ is complete, there is a 4-cycle between each two vertices in part $U$ and each two vertices in part $V$ . There are three options to color an edge $u_2v_1$. In the following, we elaborate the problems which we encounter with for each label of the edge $u_2v_1$.
	
	If we choose 2 for $u_2v_1$, then we have to choose 1 for edges $u_2v_2$ or $u_2v_3$ or $u_2v_4$. However, by this assumptions, we have one of the quadruples in $Q$ for 4-cycles $u_1v_1u_2v_2, u_1v_1u_2v_3, u_1v_1u_2v_4$.
	Hence, to color the edge u2v1 we cannot select the label 2. The same discussion holds for the edge $u_3v_1$.
	
	There are three options to color the edge $u_2v_2$ which are 1, 3, 4. If we choose 1 for $u_2v_2$, then we have a 4-cycle, $v_1u_1v_2u_2$ with the quadruple $(1,2,1,2)$ which contradicts our assumption. Therefore, there are two labels 3 and 4 for the edge $u_2v_2$. We have the same scenario for the
	edge $u_3v_2$. Since possible labels for the edge $u_2v_1$ are also 3 and 4, possible labels for two edges $u_2v_3$ and $u_2v_4$ are 1 and 2. Moreover, since there are two labels 3 and 4 for the edges
	$u_3v_1$ and $u_3v_2$, the labels for $u_3v_3$ and $u_3v_4$ can be 1 and 2. Consequently, we encounter with a 4-cycle whose edges have two options 1 and 2, only, which results in the quarter $(1,2,1,2)$, a contradiction. From these discussions we conclude that a 4-edge coloring for the edges of $K_{3,4}$ on the condition that none of quadruples in the set $Q$
	appears in a 4-cycle is impossible. Therefore, the VN graph of the $(7, 4)$ ETS with the mentioned
	restriction is not 4-edge coloring. According to Proposition \ref{Prop1}, the removal of 8-cycles obtained
	from three rows in which the first row is used twice and the second row is used at least once
	causes a $(4, n)$-regular QC-LDPC code with girth 8 and free of $(7, 4)$ ETSs.
\end{proof}

\begin{table}[h]
	\begin{center}
		\caption{Fully connected $(4,n)$-regular QC-LDPC codes with $n=5,6,7$ whose Tanner graphs have girth 6 and are free of $(5,b)$ ETSs, where $b\leq4$, as well as $(6,b)$ ETSs,  where $b\leq2$, OR Tanner graphs have girth 8 and   are free of $(7,4)$ ETSs}\label{Tabel}
		\begin{tabular}{|c|c|c|c|c|c|}
			\hline
			$n$&$N\ for\ girth\ 6$&$ exponent\ matrices$&$N\ for\ girth\ 8$&$ exponent\ matrices$\\
			\hline
			5& 13 & $\begin{array}{cccc}
			1 & 3 & 7 & 11 \\
			4& 12& 2& 5  \\
			10& 4& 5& 6\\
			\end{array}$&41&  $\begin{array}{cccc}
			1& 4& 11& 29\\
			2& 8 &17& 22\\
			14& 35& 33& 9\\
			\end{array}$\\
			\hline
			6&18& $\begin{array}{ccccc}
			4& 7& 8& 15& 17\\
			9& 11& 6& 10& 14\\
			13& 14& 2& 5& 3\\
			\end{array}$&63&  $\begin{array}{ccccc}
			1& 13& 16& 33& 39\\
			2& 7& 11& 21& 29\\
			4& 58& 22& 56& 14\\
			\end{array}$\\
			\hline
			7&21 & $\begin{array}{cccccc}
			2& 3& 4& 9& 14& 17\\
			10& 6& 15& 18& 19& 16\\
			13& 18& 10& 12& 16& 1\\
			\end{array}$&91&  $\begin{array}{cccccc}
			1& 4& 13& 30& 40& 45\\
			2& 8& 22& 33& 56& 75\\
			14& 48& 67& 85& 25& 83\\
			\end{array}$\\
			\hline
		\end{tabular}
	\end{center}
\end{table}

\section{Conclusion}\label{VI}
 The concept of edge coloring to determine the occurrence and avoidance of elementary trapping sets (ETSs) in the Tanner graph of fully connected regular QC-LDPC codes has been proposed recently in the literature. In this paper, we extended this topic to  QC-LDPC codes with column weight 4 which results in sufficient conditions for exponent matrices to have $(4,n)$-regular QC-LDPC codes with girths 6 and 8 whose Tanner graphs are free of small ETSs. We proved that in a QC-LDPC code with column weight 4 and girth 6 avoiding 6-cycles obtained from the first three rows of the exponent matrix and avoiding 6-cycles obtained from the first, second and the last rows of the exponent matrix cause the removal of $(5,b)$ ETSs, where $b\leq4$, as well as $(6,b)$ ETSs,  where $b\leq2$. Moreover, for girth 8  Tanner graphs whose variable nodes have degree 4 the sufficient condition to remove $(7,4)$ ETSs from Tanner graph is avoiding 8-cycles obtained from three rows of the exponent matrix in which the first row is used twice and the second row is used at least once.
 


\begin{thebibliography}{1}
	
\bibitem{2010}
S. Laendner, T. Hehn, O. Milenkovic and J. B. Huber,  ``Characterization of small trapping sets in LDPC codes from Steiner triple systems," {\it in proc. 6th Int. Symp. Turbo Codes and Iterative Inf. Process.}, {\it Brest, France, Sep}.
pp. 93--97, 2010. 	

\bibitem{2014}
M. Karimi, A. H. Banihashemi,  ``On characterization of elementary trapping sets of variable-regular   LDPC codes," {\it IEEE Trans. Inf. Theory}, vol. 60,  no. 9, pp. 5188--5203, 2014.

\bibitem{2015}
Y. Hashemi, A. H. Banihashemi,  ``On characterization and efficient exhaustive search  of elementary trapping sets of variable-regular  LDPC codes," {\it IEEE Commun. Lett.},  vol. 19, no. 3, pp. 323--326, 2015.

\bibitem{2016}
Y. Hashemi, A. H. Banihashemi,  ``New characterization and efficient exhaustive search algorithm for leafless elementary trapping sets of variable-regular LDPC codes," {\it IEEE Trans. Inf. Theory}, vol.62, no.12, pp. 6713--6736, 2016. 	


\bibitem{farzane}
F. Amirzade and M.-R Sadeghi,  ``Analytical lower bounds on the size of elementary trapping sets of variable-regular  LDPC codes with any girth and irregular ones with  girth 8," {\it IEEE Tans. Commun.}, Early accessed, 2017.	 
\bibitem{Vasic2}
D. V. Nguyen, S. K. Chilappagari, N. W. Marcellin and B. Vasic,  ``On the construction of structured LDPC codes free of small trapping sets," {\it IEEE Trans. Inf. theory}, vol. 58, no. 4
pp. 2280--2302, 2012.


\bibitem{Vasic}
M. Diouf, D. Declercq, S. Ouya and B. Vasic,  ``A PEG-like LDPC code design avoiding short trapping sets," {\it IEEE Int. Symp. Inf. theory (ISIT)}, vol. 6, 
  pp. 14--19, 2015.
  \bibitem{main}
  X. Tao, Y. Li, Y. Liu and Z. Hu  ``On the construction of LDPC codes free of small trapping sets by controlling cycles," {\it IEEE Commun. Letters}, vol.60, no. 9,
 pp. 5188--5203,  2017. 
\bibitem{TCOMM2}
F. Amirzade and M.-R Sadeghi,  `` Efficient Search of QC-LDPC Codes with Girths 6 and 8 Free of Small Elementary Trapping Sets Using Edge Coloring", arXiv.
 \bibitem{Access}
 F. Amirzade and M.-R Sadeghi,  ``Lower Bounds  on the Lifting Degree of QC-LDPC Codes by Difference Matrices," {\it IEEE Access}, vol. 6, pp. 23688--23700, 2018. 
 \bibitem{Lcomm}
M.-R Sadeghi and F. Amirzade,  ``Analytical Lower Bound on the Lifting Degree of Multiple-Edge QC-LDPC Codes with Girth 6," {\it IEEE Commun. Letters}, Early accessed, 2018. 
   \bibitem{2004}
  M. P. C. Fossorier,  ``Quasi-Cyclic Low-Density Parity-Check codes from circulant permutation matrices," {\it IEEE Trans. Inf. Theory}, vol. 50, no.8
  pp. 1788--1793, 2004.   
\bibitem{Bondy}
J. A. Bondy and U. S. R. Murty,  Graph theories and applications, {\it Department of combinatorics and optimization}, {\it University of Waterloo}, 1979.  
   \bibitem{Green}
R. Green,  ``Vizing's theorem and edge-chromatic graph theory," 2015.   
  
\end{thebibliography}
\end{document}